\documentclass[12pt]{article}
\usepackage{graphicx, amsmath, amssymb, bm,enumerate,comment,amsthm}
\usepackage{fancybox, boxedminipage}
\usepackage{ascmac}
\usepackage{fancyheadings}
\usepackage{fancybox}
\usepackage[english]{babel}
\usepackage{authblk}

\newtheorem{theorem}{Theorem}
\newtheorem{lemma}{Lemma}

\newtheorem{proposition}{Proposition}

\def \R {\mathbb{R}}
\def \Z {\mathbb{Z}}

\begin{document}

\title{A 4-Approximation Algorithm for $k$-Prize Collecting Steiner Tree Problems
}


\author{Yusa Matsuda\; and \;Satoshi Takahashi\\ 
        The University of Electro-Communications, Japan
}

\date{February 19, 2018}

\maketitle

\begin{abstract}
This paper studies a 4-approximation algorithm for $k$-prize collecting Steiner tree problems. This problem generalizes both $k$-minimum spanning tree problems and prize collecting Steiner tree problems. Our proposed algorithm employs two 2-approximation algorithms for $k$-minimum spanning tree problems and prize collecting Steiner tree problems. Also our algorithm framework can be applied to a special case of $k$-prize collecting traveling salesman problems.

\end{abstract}

\section{Introduction}
In the network design problems, we have been studying many combinatorial problems such as minimum Steiner tree problems and minimum spanning tree problems\cite{Korte}. 
This paper studies a $k$-prize collecting Steiner tree problems ($k$-PCST) which is proposed by Han et.al.\cite{Han2016}. This problem generalizes both $k$-minimum spanning tree problems and prize collecting Steiner tree problems. 

We are given an undirected connected graph $G=(V,E)$, a root vertex $r\in V$, and an integer $k$. Let $c:E\to \R_+$ be a nonnegative cost function on $E$ and $\pi:V\to \R_+$ be a nonnegative penalty cost function on $V$. Suppose that a tuple $(G, r, k, c, \pi)$ is an instance of $k$-PCST. The goal is to find an $r$-rooted subtree $F=(V_F, E_F)$ that spans at least $k$ vertices with minimum total cost that is described by $\sum_{e\in E_F}c(e)+\sum_{v\in V\setminus V_F}\pi(v)$. Han et.al.\cite{Han2016} proposed a 5-approximation algorithm based on primal-dual method for this problem. Suppose OPT and $F_{k-{\rm PCST}}^{*}=(V_{k-{\rm PCST}}^{*}, E_{k-{\rm PCST}}^{*})$ are the optimal value and an optimal solution for the $k$-PCST instance.

Han et. al. \cite{Han2016} assume the following four general conditions to establish 5-approximation algorithm for $k$-PCST.  
\begin{itemize}
\item $G$ is a complete graph.
\item The cost function $c$ satisfies a triangular inequality.
\item For each $v\in V$, a distance from $r$ is smaller than the optimal value of PCST obtained from the $k$-PCST instance.
\item The minimum nonnegative value of cost function is smaller than the optimal value of $k$-PCST.
\end{itemize}
Our paper employs same conditions when we discuss $k$-PCST.

In this paper, we propose a 4-approximation algorithm for $k$-PCST based on two primal-dual methods for $k$-minimum spanning tree problems and prize collecting Steiner tree problems. Also we propose 4-approximation algorithm, that is the same framework, for $k$-prize collecting traveling salesman problems under the triangular inequality.

\section{Preliminaries}
The $k$-prize collecting Steiner tree problem is a generalization problem of both $k$-minimum spanning tree problems ($k$-MST) and prize collecting Steiner tree problems (PCST)\cite{Han2016}. 

$k$-MST is one of well studied network design problems. 
Given an undirected connected graph $G=(V,E)$, a vertex $r\in V$, and an integer $k$. Let $c:E\to \R_+$ be a nonnegative cost function on $E$. Suppose that a tuple $(G, r, k, c)$ is an instance of rooted $k$-MST.
The goal of this problem is to find a subtree $F = (V_F, E_F)$ of $G$ such that $F$ includes the vertex $r$, $|V_F|\geq k$, and minimum total cost which is $\sum_{e \in E_F}c(e)$.
Unrooted $k$-MST was proposed by \cite{Ravi1996} who proved its NP-hardness and gave a $3\sqrt{k}$-approximation algorithm. After that, many approximation algorithms have been proposed (see \cite{Fischetti1994}, \cite{Garg1996}, \cite{Blum1999} and \cite{Arora2006}). The best approximation ratio for this problem is 2 by Garg\cite{Garg2005} who treats rooted $k$-MST.

For the $k$-PCST instance $(G, r,k,c,\pi)$, we can get a rooted $k$-MST instance by $\pi(v)=0$ for each $v\in V$. Suppose that OPT$_{k\text{-MST}}$ and $F_{k\text{-MST}}^{*}=(V_{k\text{-MST}}^{*}, E_{k\text{-MST}}^{*})$ are the optimal value and an optimal solution for the $k$-MST instance induced by the $k$-PCST instance.
We hold the following lemma.
\begin{lemma}\label{lm:lm1}
An optimal solution $F^{*}_{k\text{\rm -MST}}$ for $k${\rm -MST} satisfies
\[
\sum_{e \in E_{k\text{\rm -MST}}^{*}}c(e) \leq \mbox{\rm OPT}.
\]
\end{lemma}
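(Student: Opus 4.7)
The plan is to exhibit the $k$-PCST optimal tree itself as a feasible solution to the induced $k$-MST instance, and then bound by monotonicity.

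First I would recall that the $k$-MST instance induced by $(G,r,k,c,\pi)$ is obtained by setting all penalties to zero while keeping $G$, $r$, $k$, and $c$ unchanged. In particular, the set of feasible solutions for $k$-MST is exactly the set of $r$-rooted subtrees of $G$ that span at least $k$ vertices, which is precisely the feasibility requirement imposed on the $k$-PCST solution as well.

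Next I would take the $k$-PCST optimal tree $F^{*}_{k\text{-PCST}}=(V^{*}_{k\text{-PCST}},E^{*}_{k\text{-PCST}})$. By definition it contains $r$ and satisfies $|V^{*}_{k\text{-PCST}}|\geq k$, so it is a feasible solution for the induced $k$-MST instance. Since $F^{*}_{k\text{-MST}}$ is an optimal $k$-MST solution, optimality gives
\[
\sum_{e\in E^{*}_{k\text{-MST}}}c(e)\;\leq\;\sum_{e\in E^{*}_{k\text{-PCST}}}c(e).
\]

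Finally, I would add back the nonnegative penalty term. Because $\pi(v)\geq 0$ for every $v\in V$,
\[
\sum_{e\in E^{*}_{k\text{-PCST}}}c(e)\;\leq\;\sum_{e\in E^{*}_{k\text{-PCST}}}c(e)+\sum_{v\in V\setminus V^{*}_{k\text{-PCST}}}\pi(v)\;=\;\mbox{OPT},
\]
and chaining the two inequalities yields the claim.

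There is essentially no obstacle here: the lemma is a structural observation rather than a computational one, and the only thing to be careful about is making explicit that the $k$-PCST feasibility condition (contains $r$, spans $\geq k$ vertices) coincides with $k$-MST feasibility, so that the $k$-PCST optimal tree can legitimately be used as a comparison object for the $k$-MST optimum.
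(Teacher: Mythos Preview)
Your proof is correct and follows essentially the same approach as the paper: use that $F^{*}_{k\text{-PCST}}$ is feasible for the induced $k$-MST instance to get $\sum_{e\in E^{*}_{k\text{-MST}}}c(e)\le \sum_{e\in E^{*}_{k\text{-PCST}}}c(e)$, and use nonnegativity of $\pi$ to bound the latter by $\mathrm{OPT}$. The paper merely applies these two steps in the opposite order.
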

\begin{proof}
Since the cost function $c$ and the penalty function $\pi$ are nonnegative, and $|V_{k\text{-PCST}}^{*}| \geq k$, 
\begin{eqnarray*}
\mathrm{OPT} &=& \sum_{c \in E_{k\text{-PCST}}^{*}}c(e) + \sum_{v \not\in V_{k\text{-PCST}}^{*}}\pi(v) \\
& \geq & \sum_{e \in E_{k\text{-PCST}}^{*}}c(e) \\
& \geq & \sum_{e \in E_{k\text{-MST}}^{*}}c(e).
\end{eqnarray*}
\end{proof} 
From the lemma \ref{lm:lm1}, a set of feasible solutions of $k$-MST includes any feasible solutions of $k$-PCST.

Other problem is prize collecting Steiner tree problems (PCST). In the PCST, given an undirected connected graph $G=(V,E)$, an root vertex $r\in V$. Let $c:E\to \R_+$ be a nonnegative cost function on $E$ and $\pi:V\to \R_+$ be a nonnegative penalty cost function on $V$. The goal is to find an $r$-rooted subtree $F=(V_F, E_F)$ with minimum total cost that is $\sum_{e\in E_V}c(e)+\sum_{v\in V\setminus V_F}\pi(v)$. Goemans and Williamson derive a 2-approximation algorithm for this problem\cite{G-W}. They show the following theorem.
\begin{theorem}[\cite{G-W}]\label{thm:thm1}
Consider an arbitrary PCST instance where $|V|\ge 2$. Suppose that $F=(V_F, E_F)$ and $\bm{y}$ is an output of the algorithm. It holds that
\[
\sum_{e\in E_F}c(e)+\left(2-\frac{1}{n-1}\right)\sum_{v\notin V_F}\pi(v)\le \left(2-\frac{1}{n-1}\right)\sum_{S\subseteq V\setminus\{r\}}y_S.
\]
\end{theorem}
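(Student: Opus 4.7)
The plan is to replay the Goemans--Williamson primal--dual moat-growing analysis. First I set up the cut-based LP relaxation of PCST: variables $x_e\ge 0$ on edges and $z_S\ge 0$ on subsets $S\subseteq V\setminus\{r\}$, with constraints $\sum_{e\in\delta(S)}x_e+z_S\ge 1$ for each $S$ and $\sum_{S\ni v}z_S\le 1$ for each vertex $v$. Its dual places $y_S\ge 0$ on every subset, subject to $\sum_{S:\,e\in\delta(S)}y_S\le c(e)$ per edge and $\sum_{S\ni v}y_S\le\pi(v)$ per vertex. The algorithm maintains a partition of $V$ into components, each labelled \emph{active} or \emph{inactive}; it raises $y_S$ at unit rate on every active component in parallel until either an edge becomes tight (merging two components) or a component's total dual equals its aggregated penalty (deactivating it). After termination, the subgraph of tight edges is pruned by repeatedly deleting any inactive component that currently appears as a leaf of the contracted forest; the survivor is $F$.

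Second, complementary slackness supplies the two building blocks. Every edge in $E_F$ is tight, so
\[
\sum_{e\in E_F}c(e)=\sum_{S\subseteq V\setminus\{r\}} y_S\cdot\bigl|\delta(S)\cap E_F\bigr|.
\]
Every vertex $v\notin V_F$ belongs to a pruned component $C$, and telescoping over the nested components that ever contained $v$ gives $\sum_{v\in C}\pi(v)=\sum_{S\subseteq C,\ S\text{ was a component}} y_S$. I parametrise the algorithm by time $t\in[0,T]$ so that $\sum_S y_S=\int_0^T|\mathcal{A}(t)|\,dt$, where $\mathcal{A}(t)$ is the active family at time $t$, and the left-hand side of the target inequality grows at rate $\sum_{S\in\mathcal{A}(t)}|\delta(S)\cap E_F|+\alpha\,|\mathcal{P}(t)|$, with $\mathcal{P}(t)$ the set of pruned (deactivated and deleted) components at time $t$ and $\alpha=2-1/(n-1)$.

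The whole theorem reduces to the pointwise averaging lemma
\[
\sum_{S\in\mathcal{A}(t)}\bigl|\delta(S)\cap E_F\bigr|+\alpha\,|\mathcal{P}(t)|\;\le\;\alpha\,|\mathcal{A}(t)|\qquad\text{for all }t\in[0,T].
\]
To prove it, I contract each current component to a super-node; the edges of $E_F$ across super-nodes form a forest, so the super-degree sum is at most $2(|\mathcal{A}(t)|+|\mathcal{P}(t)|)-2$, while pruning forces every pruned super-node to have degree at most $1$, contributing at most $|\mathcal{P}(t)|$ to the sum. The main obstacle is sharpening the constant from the trivial $2$ down to $\alpha=2-1/(n-1)$: that saving comes from observing that the contracted forest has at most $n-1$ edges in total, so the ``$-2$'' slack can be amortised as a $1/(n-1)$ bonus per active super-node, uniformly in $t$. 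Once the averaging lemma holds with this sharp constant, integrating from $t=0$ to $t=T$ yields the stated inequality.
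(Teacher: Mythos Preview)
The paper does not prove this theorem at all: it is quoted as a result of Goemans and Williamson \cite{G-W} and used as a black box. So there is no ``paper's own proof'' to compare against; I can only assess whether your sketch stands up as a reconstruction of the original argument.

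Your overall architecture---cut LP, moat growing, tight-edge accounting, a degree-averaging lemma, then integrate over time---is indeed the Goemans--Williamson template, but two places are miswired. First, the LP you write is not the GW relaxation: the primal covering constraint should be $x(\delta(S))+\sum_{T\supseteq S}z_T\ge 1$ (no per-vertex constraint on the $z$'s), and correspondingly the dual packing constraint is $\sum_{T\subseteq S}y_T\le\pi(S)$ for every \emph{set} $S$, not $\sum_{S\ni v}y_S\le\pi(v)$ per vertex. The deactivation event in the algorithm is precisely the set constraint going tight, so with your formulation the telescoping identity $\sum_{v\in C}\pi(v)=\sum_{S\subseteq C}y_S$ does not follow. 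Second, and more seriously, your rate accounting for the penalty term is inconsistent with your own definition of $\mathcal{P}(t)$: if $\mathcal{P}(t)$ consists of components already \emph{deactivated} at time $t$, they are not growing any $y_S$ and contribute nothing to either side's rate. What actually contributes to $\alpha\sum_{v\notin V_F}\pi(v)$ at time $t$ is the number of \emph{active} components lying inside a region that will later be pruned. With the correct bookkeeping, pruned components have $F$-degree exactly $0$ (not ``at most $1$''), while \emph{surviving} inactive components have degree at least $2$; that asymmetry is what makes the averaging lemma go through. Finally, the sharpening from $2$ to $2-\tfrac{1}{n-1}$ is not obtained by amortising the ``$-2$'' over the $n-1$ edges of $F$ uniformly in $t$; it comes from the fact that at every moment there are at most $n-1$ active components (the root's component is never active), so the ``$-2$'' in the degree count yields a per-active-component saving of at least $2/(n-1)$, half of which survives the combination with the penalty term.
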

$\sum_{S\subseteq V\setminus\{r\}}y_S$ is the objective function of dual problem of the LP relaxed PCST.
This theorem shows that as the number of vertices increases, the approximation ratio becomes asymptotically 2.
The best possible approximation algorithm is 1.9672 approximation derived by Archer et. al.\cite{Archer}. When we set $k=0$, PCST is a special case of $k$-PSCT. Since every feasible solution of $k$-PCST is also a feasible solution of PCST, we are able to show the following lemma.
\begin{lemma}\label{lm:lm2}
An optimal solution $F^{*}_{\rm PCST}=(V_{\rm PCST}^{*}, E_{\rm PCST}^{*})$ for {\rm PCST} satisfies
\[
\sum_{e \in E_{\rm PCST}^{*}}c(e)+\sum_{v \notin V_{\rm PCST}^{*}}\pi(v)\leq \mbox{\rm OPT}.
\]
\end{lemma}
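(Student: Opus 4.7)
The plan is to exploit the observation, already noted in the paragraph preceding the lemma, that every feasible $k$-PCST solution is automatically a feasible PCST solution. PCST has no lower bound on the number of spanned vertices, so any $r$-rooted subtree qualifies; in particular the optimal $k$-PCST tree $F^{*}_{k\text{-PCST}}$ is a candidate solution for the PCST instance induced by $(G,r,c,\pi)$.

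First I would observe that the objective function of PCST and the objective function of $k$-PCST coincide: both evaluate an $r$-rooted subtree $F=(V_F,E_F)$ by $\sum_{e\in E_F}c(e)+\sum_{v\notin V_F}\pi(v)$. Hence when $F^{*}_{k\text{-PCST}}$ is plugged into the PCST objective, its value is exactly $\mathrm{OPT}$.

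Next, since $F^{*}_{\mathrm{PCST}}$ is optimal for PCST and $F^{*}_{k\text{-PCST}}$ is a feasible competitor, optimality gives
\[
\sum_{e\in E^{*}_{\mathrm{PCST}}}c(e)+\sum_{v\notin V^{*}_{\mathrm{PCST}}}\pi(v)\ \le\ \sum_{e\in E^{*}_{k\text{-PCST}}}c(e)+\sum_{v\notin V^{*}_{k\text{-PCST}}}\pi(v)\ =\ \mathrm{OPT},
\]
which is exactly the claimed inequality.

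There is essentially no technical obstacle here; the proof is a one-line inclusion-of-feasible-regions argument analogous to the proof of Lemma~\ref{lm:lm1}. The only thing to be careful about is to state explicitly that the $k$-vertex cardinality requirement is the \emph{only} difference between the two problems, so dropping it enlarges the feasible set and can only decrease the optimal value.
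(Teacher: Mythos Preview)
Your proof is correct and matches the paper's approach exactly: the paper does not give a formal proof of this lemma but simply observes in the preceding sentence that ``every feasible solution of $k$-PCST is also a feasible solution of PCST,'' which is precisely the inclusion-of-feasible-regions argument you have written out. There is nothing to add.
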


\section{Proposed algorithm}
We propose a primal dual method based 4-approximation algorithm for $k$-PCST. 
Our algorithm combines two procedures, one is 2-approximation algorithm for PCST proposed by Goemans and Williamson\cite{G-W}, the other is 2-approximation algorithm for $k$-MST proposed by Garg\cite{Garg2005}. We call each approximation algorithm procedure 1 and procedure 2, respectively.
\begin{itembox}[l]{4-approximation algorithm for $k$-PCST}
\begin{description}
\item [Input: ] An undirected connected graph $G = (V, E)$, an root vertex $r \in V$, an integer $k$, a cost function $c: E \to \mathbb{R}_+$, and a penalty function $\pi: V\to \R_+$.
\item [Output: ] $r$-rooted $k$-prize collecting Steiner tree of $G$.
\item [Step 1: ]For a PCST instance $(G, r, c, \pi)$, apply Procedure 1. Let $F_{\rm PCST}=(V_{\rm PCST}, E_{\rm PCST})$ be an output of this procedure. If $|V_{\rm PCST}| \geq k$, return $F_{\rm PCST}$, otherwise goto Step 2.
\item [Step 2: ]For a $k$-MST instance $(G, r, k,c)$, apply Procedure 2. Let $F_{k\text{-MST}}=(V_{k\text{-MST}}, E_{k\text{-MST}})$ be an output of this procedure.
\item [Step 3: ] Construct an undirected connected graph $G' = (V_{\rm PCST} \cup V_{k\text{-MST}}$, $E_{\rm PCST} \cup E_{k\text{-MST}})$, and find a minimum spanning tree $F_{\rm OUT} = (V_{\rm OUT}, E_{\rm OUT})$ of $G'$. Return $F_{\rm OUT}$.
\end{description}
\end{itembox}

It is easy to check the output of proposed algorithm is a feasible solution for $k$-PCST and the algorithm is an polynomial time algorithm. The analysis of the time complexity will be described later. To analyze an approximation ratio of our algorithm, we introduce some terms. Given a $k$-PCST instance $(G, r, k, c, \pi)$, let $F_{k\text{-PCST}}^*$ be an optimal solution and OPT be its optimal value. Also let $F_{\rm PCST}^*$ be an optimal solution of PCST and OPT$_{\rm PCST}$ be its optimal value, $F_{k\text{-MST}}^*$ be an optimal solution of $k$-MST and OPT$_{k\text{-MST}}$ be its optimal value.

\begin{lemma}\label{lm:lm3}
If proposed algorithm terminated at Step 1, then $F_{\rm OUT} = (V_{\rm OUT}, E_{\rm OUT})$ holds 
\[
\sum_{e \in E_{\rm OUT}}c(e) + \sum_{v \not\in V_{\rm OUT}}\pi(v) \leq 2\mathrm{OPT}.
\]
\end{lemma}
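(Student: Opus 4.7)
The plan is to invoke Goemans--Williamson's performance guarantee (Theorem~\ref{thm:thm1}) on the tree $F_{\rm PCST}$ returned by Procedure~1, and then to bound the dual objective using weak LP duality together with Lemma~\ref{lm:lm2}. Since the algorithm terminates at Step~1, we have $F_{\rm OUT}=F_{\rm PCST}$, so the quantity we wish to control is exactly the PCST objective at the Goemans--Williamson output.

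First, I would apply Theorem~\ref{thm:thm1} to $F_{\rm PCST}$. Because $2-\tfrac{1}{n-1}\ge 1$ for every $n\ge 2$, the left-hand side of the inequality in Theorem~\ref{thm:thm1} already dominates the raw PCST objective $\sum_{e\in E_{\rm PCST}}c(e)+\sum_{v\notin V_{\rm PCST}}\pi(v)$ that we care about. Hence
\[
\sum_{e\in E_{\rm PCST}}c(e)+\sum_{v\notin V_{\rm PCST}}\pi(v)\;\le\;\left(2-\tfrac{1}{n-1}\right)\sum_{S\subseteq V\setminus\{r\}}y_S.
\]

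Next, I would use the fact that $\sum_{S\subseteq V\setminus\{r\}}y_S$ is the dual objective of the LP relaxation of PCST, and the vector $\bm{y}$ produced by Procedure~1 is feasible for that dual. By weak LP duality this sum is at most the optimal LP value, hence at most the integer optimum $\mathrm{OPT}_{\rm PCST}$. Combining with Lemma~\ref{lm:lm2}, which gives $\mathrm{OPT}_{\rm PCST}\le \mathrm{OPT}$, and with $2-\tfrac{1}{n-1}\le 2$, the chain of inequalities yields
\[
\sum_{e\in E_{\rm OUT}}c(e)+\sum_{v\notin V_{\rm OUT}}\pi(v)\;\le\;\left(2-\tfrac{1}{n-1}\right)\mathrm{OPT}_{\rm PCST}\;\le\;2\,\mathrm{OPT},
\]
which is the claimed bound.

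There is no substantial obstacle in this argument; it is essentially a bookkeeping chain. The only delicate point to state carefully is that the coefficient $2-\tfrac{1}{n-1}$ in Theorem~\ref{thm:thm1} is simultaneously $\le 2$ (used on the right) and $\ge 1$ (used on the left, to absorb the penalty term whose coefficient in the target inequality is $1$ rather than $2-\tfrac{1}{n-1}$). Once this is observed, the proof reduces to two lines. The edge case $n=1$ does not arise because in that case the only feasible tree is the trivial one at $r$ and the claim holds by inspection.
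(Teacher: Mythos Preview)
Your argument is correct and follows essentially the same route as the paper: bound the PCST objective of the Procedure~1 output by $2\,\mathrm{OPT}_{\rm PCST}$ and then invoke Lemma~\ref{lm:lm2}. The only difference is cosmetic: the paper cites the $2$-approximation guarantee of Goemans--Williamson directly, whereas you unpack it via Theorem~\ref{thm:thm1} and weak LP duality.
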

\begin{proof}
From the assumption, the output $F_{\rm OUT}$ is obtainable from Procedure 1. Since $F_{\rm OUT}$ contains a vertex $r$ and $|V_{\rm OUT}|\ge k$, $F_{\rm OUT}$ is a feasible solution of $k$-PCST. Suppose $\mathrm{OPT}_{\rm PCST}$ is an optimal value of PCST instance $(G, r, c, \pi)$. From lemma \ref{lm:lm2}, we get
\begin{eqnarray*}
\sum_{e \in E_{\rm OUT}}c(e) + \sum_{v \not\in V_{\rm OUT}}\pi(v) &\leq& 2\mathrm{OPT}_{\rm PCST}\\ &\leq& 2\mathrm{OPT}.
\end{eqnarray*}
\end{proof}

\begin{theorem}\label{thm:thm1}
 If proposed algorithm terminated at Step 4, then $F_{\rm OUT} = (V_{\rm OUT}, E_{\rm OUT})$ holds 
 \[
 \sum_{e \in E_{\rm OUT}}c(e) + \sum_{v \not\in V_{\rm OUT}}\pi(v) \leq 4\mathrm{OPT}.
 \]
\end{theorem}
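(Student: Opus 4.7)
The plan is to bound the cost of $F_{\rm OUT}$ by relating it to the two intermediate trees $F_{\rm PCST}$ and $F_{k\text{-MST}}$, and then invoking the two approximation guarantees together with Lemmas \ref{lm:lm1} and \ref{lm:lm2} to bound each piece by $2\mathrm{OPT}$.

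First I would establish two structural facts about $F_{\rm OUT}$. Since both $F_{\rm PCST}$ and $F_{k\text{-MST}}$ contain the root $r$, their union is a connected subgraph of $G'$ spanning the vertex set $V_{\rm PCST}\cup V_{k\text{-MST}}$. Hence the minimum spanning tree $F_{\rm OUT}$ of $G'$ also spans all of $V_{\rm PCST}\cup V_{k\text{-MST}}$, which in particular implies $V_{\rm OUT}\supseteq V_{\rm PCST}$. Consequently the penalty term satisfies
\[
\sum_{v\notin V_{\rm OUT}}\pi(v)\le \sum_{v\notin V_{\rm PCST}}\pi(v),
\]
and, because $E_{\rm PCST}\cup E_{k\text{-MST}}$ is a connected spanning subgraph of $G'$ whose edge set contains a spanning tree of $G'$, the edge cost satisfies
\[
\sum_{e\in E_{\rm OUT}}c(e)\le \sum_{e\in E_{\rm PCST}}c(e)+\sum_{e\in E_{k\text{-MST}}}c(e).
\]

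Next I would combine these two inequalities and split the right-hand side into a ``PCST part'' and a ``$k$-MST part.'' The PCST part $\sum_{e\in E_{\rm PCST}}c(e)+\sum_{v\notin V_{\rm PCST}}\pi(v)$ is bounded by $2\mathrm{OPT}_{\rm PCST}$ via Theorem \ref{thm:thm1} of Goemans--Williamson (noting that the coefficient on the penalty term there is at least $1$, so the same bound holds with coefficient $1$), and then by $2\mathrm{OPT}$ using Lemma \ref{lm:lm2}. The $k$-MST part $\sum_{e\in E_{k\text{-MST}}}c(e)$ is bounded by $2\mathrm{OPT}_{k\text{-MST}}$ by Garg's 2-approximation, and then by $2\mathrm{OPT}$ using Lemma \ref{lm:lm1}. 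Adding the two contributions yields the desired $4\mathrm{OPT}$ bound.

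The only delicate point is the structural claim used at the start, namely that $F_{\rm OUT}$ spans $V_{\rm PCST}\cup V_{k\text{-MST}}$ and has cost at most $c(E_{\rm PCST})+c(E_{k\text{-MST}})$. This relies crucially on $F_{\rm PCST}$ and $F_{k\text{-MST}}$ both containing $r$, so that their union is connected; once connectivity is in hand, the MST of $G'$ is at most as expensive as any spanning tree, and in particular at most the weight of the union of the two trees. After that, everything reduces to adding the two $2\mathrm{OPT}$ bounds established in the previous sections, so no further calculation is needed.
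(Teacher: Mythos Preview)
Your proposal is correct and follows essentially the same approach as the paper: bound the edge cost of $F_{\rm OUT}$ by $c(E_{\rm PCST})+c(E_{k\text{-MST}})$ and the penalty by $\sum_{v\notin V_{\rm PCST}}\pi(v)$, then apply the two $2$-approximation guarantees together with Lemmas~\ref{lm:lm1} and~\ref{lm:lm2}. If anything, you justify the structural facts (connectivity of the union via the shared root $r$, and the MST cost bound) more carefully than the paper, which simply asserts $V_{\rm OUT}=V_{\rm PCST}\cup V_{k\text{-MST}}$ and $E_{\rm OUT}\subseteq E_{\rm PCST}\cup E_{k\text{-MST}}$.
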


\begin{proof}
Since $F_{\rm OUT}$ is a spanning tree of $G'$, it holds $V_{\rm OUT} = V_{\rm PCST} \cup V_{k\text{-MST}}$ and $E_{\rm OUT} \subseteq E_{\rm PCST} \cup E_{k\text{-MST}}$. Since $r \in V_{\rm PCST}$ and $|V_{k\text{-MST}}| = k$ satisfy, $F_{\rm OUT}$ is a feasible solution of $k$-PCST. From the lemma \ref{lm:lm1} \ref{lm:lm2}, and \ref{lm:lm3}, we get 
\begin{eqnarray*}
\sum_{e \in E_{\rm OUT}}c(e) + \sum_{v \not\in V_{\rm OUT}}\pi(v) &\leq& \sum_{e \in E_{k\text{-MST}}}c(e) + \sum_{e \in E_{\rm PCST}}c(e) + \sum_{v \not\in V_{\rm PCST}}\pi(v)\\
& \leq & 2\mathrm{OPT}_{k\text{-MST}} + 2\mathrm{OPT}_{\rm PCST}\\
& \leq & 2\mathrm{OPT} + 2\mathrm{OPT}\\ &=& 4\mathrm{OPT}.
\end{eqnarray*}
\end{proof} 

From above discussion, we get 4-approximation algorithm for $k$-PCST. Our algorithm has a following property.

\begin{proposition}
If there exists an optimal solution $F^{\star} = (V^{\star}, E^{\star})$ for $k$-{\rm PCST} such that $\sum_{e \in E^{\star}}c(e) \leq \frac{\mathrm{OPT}}{2}$, then $F_{\rm OUT} = (V_{\rm OUT}, E_{\rm OUT})$ holds 
\[\sum_{e \in E_{\rm OUT}}c(e) + \sum_{v \not\in V_{\rm OUT}}\pi(v) \leq 3\mathrm{OPT}.
\]
\end{proposition}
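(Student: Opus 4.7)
The plan is to revisit the chain of inequalities used in the proof of Theorem~\ref{thm:thm1} and show that the added hypothesis $\sum_{e \in E^{\star}}c(e) \leq \mathrm{OPT}/2$ sharpens the $k$-MST contribution from $2\,\mathrm{OPT}_{k\text{-MST}} \leq 2\,\mathrm{OPT}$ down to at most $\mathrm{OPT}$. The PCST contribution is left unchanged, and summing yields the claimed factor of $3$.

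Concretely, I would first note that when the algorithm does not terminate at Step~1, we have $|V_{\mathrm{PCST}}| < k$, and by the construction of Step~3, $F_{\mathrm{OUT}}$ is a spanning tree of $G' = (V_{\mathrm{PCST}}\cup V_{k\text{-MST}}, E_{\mathrm{PCST}}\cup E_{k\text{-MST}})$. Exactly as in the proof of Theorem~\ref{thm:thm1}, this yields
\[
\sum_{e \in E_{\mathrm{OUT}}}c(e) + \sum_{v \notin V_{\mathrm{OUT}}}\pi(v) \;\leq\; \sum_{e \in E_{k\text{-MST}}}c(e) \;+\; \sum_{e \in E_{\mathrm{PCST}}}c(e) + \sum_{v \notin V_{\mathrm{PCST}}}\pi(v).
\]

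Next, I would use the hypothesis to sharpen the bound on $\mathrm{OPT}_{k\text{-MST}}$. Because $F^{\star}$ contains $r$ and spans at least $k$ vertices, it is itself feasible for the induced $k$-MST instance, hence (strengthening Lemma~\ref{lm:lm1} via the hypothesis)
\[
\mathrm{OPT}_{k\text{-MST}} \;\leq\; \sum_{e \in E^{\star}}c(e) \;\leq\; \frac{\mathrm{OPT}}{2}.
\]
Applying Garg's $2$-approximation for $k$-MST (Procedure~2) then gives $\sum_{e\in E_{k\text{-MST}}}c(e)\leq 2\,\mathrm{OPT}_{k\text{-MST}} \leq \mathrm{OPT}$. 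For the PCST part, the (asymptotic) $2$-approximation of Goemans--Williamson together with Lemma~\ref{lm:lm2} yields the usual
\[
\sum_{e \in E_{\mathrm{PCST}}}c(e) + \sum_{v \notin V_{\mathrm{PCST}}}\pi(v) \;\leq\; 2\,\mathrm{OPT}_{\mathrm{PCST}} \;\leq\; 2\,\mathrm{OPT}.
\]
Adding the two bounds produces $\sum_{e \in E_{\mathrm{OUT}}}c(e) + \sum_{v \notin V_{\mathrm{OUT}}}\pi(v) \leq \mathrm{OPT} + 2\,\mathrm{OPT} = 3\,\mathrm{OPT}$.

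There is essentially no deep obstacle here; the proof is a direct refinement of Theorem~\ref{thm:thm1}. The only point that deserves care is verifying that the hypothesis can indeed be used to replace $\mathrm{OPT}$ by $\mathrm{OPT}/2$ in the $k$-MST bound: this is exactly the role of the feasibility of $F^{\star}$ for the induced $k$-MST instance, via Lemma~\ref{lm:lm1}. The PCST side cannot be improved from this hypothesis alone, since a small edge cost for $F^{\star}$ forces a large penalty mass but does not translate into a smaller $\mathrm{OPT}_{\mathrm{PCST}}$; that is precisely why the resulting factor is $3$ rather than anything smaller.
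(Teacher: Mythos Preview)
Your proposal is correct and follows essentially the same route as the paper: you reuse the decomposition from Theorem~\ref{thm:thm1}, observe that the feasibility of $F^{\star}$ for the induced $k$-MST instance gives $\mathrm{OPT}_{k\text{-MST}}\le \sum_{e\in E^{\star}}c(e)\le \mathrm{OPT}/2$, and then combine $2\,\mathrm{OPT}_{k\text{-MST}}\le \mathrm{OPT}$ with $2\,\mathrm{OPT}_{\mathrm{PCST}}\le 2\,\mathrm{OPT}$. The paper's proof is terser but identical in substance (and, like yours, treats only the case where the algorithm reaches Step~3, since the Step~1 case is already covered by Lemma~\ref{lm:lm3}).
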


\begin{proof}
Consider the case that our algorithm terminated in Step 4. From the theorem \ref{thm:thm1}, $F_{\rm OUT}$ is a feasible solution of $k$-PCST. We get
\begin{eqnarray*}
\sum_{e \in E_{\rm OUT}}c(e) + \sum_{v \not\in V_{\rm OUT}}\pi(v) &\leq& 2\mathrm{OPT}_{k\text{-MST}} + 2\mathrm{OPT}_{\rm PCST} \\
	&\leq& 2\sum_{e \in E^{\star}}c(e) + 2\mathrm{OPT}_{\rm PCST} \\
    &\leq& \mathrm{OPT} + 2\mathrm{OPT}\\ &=& 3\mathrm{OPT}.
\end{eqnarray*}
\end{proof} 
Next, we describe the time complexity of the proposed algorithm for the $k$-PCST problem. If the number of vertices is $n$ and the number of edges is $m$ for the input graph, the time complexity of GW-algorithm is O($n^2 \log n$), the time complexity of Garg's algorithm is O($mn^4\log n$). 
In the proposed algorithm, we call the algorithm of GW-algorithm and Garg's algorithm once. The minimum spanning tree is obtained for the subgraph obtained last, but by using the Kruskal algorithm O($m\log n$). Therefore, the time complexity of the proposed algorithm is O($mn^4\log n$).

\section{$k$-Prize collecting traveling salesman problems}
In order to simplify the explanation, a tour of the graph $G$ is represented as a sequence of vertices, denoted by $T=(u_1,u_2,...,u_k,u_1)$. Denote that $E[T]$ is a edge set and $V[T]$ is a vertex set induced by $T$. In other words, we define $E[T]=\{\{u_i, u_{i+1}\}\mid i=1,2,...,k-1\}\cup \{\{u_k, u_1\}\}$ and $V[T]=\{u_i\mid i=1,2,...,k\}$.

We are given a complete graph $G=(V,E)$, a root vertex $r\in V$, and an integer $k$. Let $c:E\to \R_+$ be a nonnegative cost function on $E$ and $\pi:V\to \R_+$ be a nonnegative penalty function. The $k$-prize collecting traveling salesman problem ($k$-PCTSP) is a problem of finding a tour $T$ that minimizes $\sum_{e\in E[T]}c(e)+\sum_{v\notin V[F]}\pi(v)$ in the tour of G with the number of vertices equal to or larger than $k$ and including the vertex $r$. This problem is a generalization of both $k$-traveling salesman problems($k$-TSP) and penalty traveling salesman problems(PTSP)\cite{Han2016}. $k$-TSP and PTSP is described in \cite{Handbook}. Under the triangular inequality for the edge cost function, Han et al. show that there is a 5-approximation algorithm for the $k$-PCTSP in \cite{Han2016}, which is the best approximation ratio among the approximation algorithms for the existing $k$-PCTSP.

It is known that $k$-PCTSP is a special case of prize collecting traveling salesman problems(PCTSP) proposed by Balas\cite{Balas}. Given a complete graph $G=(V,E)$, a vertex $r\in V$, a nonnegative integer $Q\in \Z_+$, a cost function $c: E \to \mathbb{R}_+$, a penalty function $\pi: V \to \mathbb{R}_+$, and a reward function $w: V \to \mathbb{Z}_+$. This problem finds a tour $T$ such that contains $r$ and minimizes $\sum_{e \in E[T]}c(e) + \sum_{v \not\in V[T]}\pi(v)$ under a cover constraint $\sum_{v \in V[T]}w(v) \geq Q$. PCTSP can be reduced to $k$-PCTSP by setting $w(v)=1$ for every $v\in V$ and $Q=k$. 

An approximation algorithm combining two approximation algorithms for quota traveling salesman problems (QTSP) and for PTSP is shown in \cite{Awerbuch} and \cite{Handbook} for the PCTSP assuming that the cost function satisfies the triangular inequality. From the fact that QTSP is a generalization of $k$-TSP, we can show a 4-approximation algorithm for $k$-PCTSP when the cost function satisfies the triangular inequality by using a similar technique.

Our algorithm framework can be applied to the $k$-PCTSP. We employ two 2-approximation algorithms, one is for PTSP\cite{G-W} and the other is for $k$-TSP\cite{Garg2005}.

\begin{itembox}[l]{4-approximation algorithm for $k$-PCTSP}
\begin{description}
\item [Input: ] A complete graph $G = (V, E)$, a vertex $r \in V$, an integer $k>0$, a cost function $c: E \to \mathbb{R}_+$, and a penalty function $\pi: V\to \R_+$.
\item [Output: ] A tour of $G$ including $r$, denoted by $T_{\rm OUT}=(V_{\rm OUT}, E_{\rm OUT})$.

\item[Step 1:] For a $k$-TSP instance $(G, r, k,c)$, we apply Garg's algorithm. Let $T_{\rm Garg}$ be an output of this procedure.
\item[Step2: ] For a PCTSP instance $(G, r, c, \pi)$, we apply GW algorithm. Let $T_{\rm GW}$ be an output of this procedure. 
\item[Step3: ] Return a short cut of a tour that merged $T_{Garg}$ and $T_{GW}$.
\end{description}
\end{itembox}

From each procedure is 2-approximation algorithm for $k$-TSP and PTSP, we can get the following by using \cite{Awerbuch}'s technique.

\begin{theorem}
Proposed algorithm is 4-approximation algorithm for $k$-PCTSP under the triangular inequality.
\end{theorem}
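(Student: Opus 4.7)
The plan is to mirror the proof of Theorem~\ref{thm:thm1} for $k$-PCST, with the standard ``merge at the root and shortcut'' trick taking the place of the minimum spanning tree in Step~3. The three ingredients I would assemble are (i) tour-analogues of Lemmas~\ref{lm:lm1} and~\ref{lm:lm2}, (ii) the $2$-approximation guarantees of Garg's algorithm for $k$-TSP and the Goemans--Williamson algorithm for PTSP, and (iii) a short-cutting argument that uses the triangle inequality to bound the cost of $T_{\rm OUT}$ by the sum of the costs of $T_{\rm Garg}$ and $T_{\rm GW}$.

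First I would set up the notation: let $\mathrm{OPT}$ be the optimal value of the $k$-PCTSP instance, and let $\mathrm{OPT}_{k\text{-TSP}}$ and $\mathrm{OPT}_{\rm PTSP}$ be the optimal values of the $k$-TSP instance $(G,r,k,c)$ and the PTSP instance $(G,r,c,\pi)$ induced by it. Exactly as in Lemmas~\ref{lm:lm1} and~\ref{lm:lm2}, nonnegativity of $c$ and $\pi$, and the fact that the optimal $k$-PCTSP tour is a feasible solution for both induced problems, yield $\mathrm{OPT}_{k\text{-TSP}}\le \mathrm{OPT}$ and $\mathrm{OPT}_{\rm PTSP}\le \mathrm{OPT}$. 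From the approximation guarantees cited in the algorithm description we then obtain $\sum_{e\in E[T_{\rm Garg}]}c(e)\le 2\,\mathrm{OPT}_{k\text{-TSP}}$ and $\sum_{e\in E[T_{\rm GW}]}c(e)+\sum_{v\notin V[T_{\rm GW}]}\pi(v)\le 2\,\mathrm{OPT}_{\rm PTSP}$.

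Next I would justify Step~3. Both $T_{\rm Garg}$ and $T_{\rm GW}$ contain $r$, so they can be concatenated at $r$ into a closed walk $W$ that visits every vertex of $V[T_{\rm Garg}]\cup V[T_{\rm GW}]$ and whose total edge cost equals $\sum_{e\in E[T_{\rm Garg}]}c(e)+\sum_{e\in E[T_{\rm GW}]}c(e)$. Because $c$ satisfies the triangle inequality, the standard short-cutting procedure turns $W$ into a tour $T_{\rm OUT}$ with $V[T_{\rm OUT}]=V[T_{\rm Garg}]\cup V[T_{\rm GW}]$ and $\sum_{e\in E[T_{\rm OUT}]}c(e)\le \sum_{e\in E[W]}c(e)$. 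Since $|V[T_{\rm Garg}]|\ge k$ and $r\in V[T_{\rm Garg}]\subseteq V[T_{\rm OUT}]$, the tour $T_{\rm OUT}$ is feasible for $k$-PCTSP, and $V\setminus V[T_{\rm OUT}]\subseteq V\setminus V[T_{\rm GW}]$ bounds the penalty term.

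Combining these pieces yields
\begin{align*}
\sum_{e\in E[T_{\rm OUT}]}c(e)+\sum_{v\notin V[T_{\rm OUT}]}\pi(v)
&\le \sum_{e\in E[T_{\rm Garg}]}c(e)+\sum_{e\in E[T_{\rm GW}]}c(e)+\sum_{v\notin V[T_{\rm GW}]}\pi(v)\\
&\le 2\,\mathrm{OPT}_{k\text{-TSP}}+2\,\mathrm{OPT}_{\rm PTSP}\\
&\le 4\,\mathrm{OPT}.
\end{align*}
I expect the main obstacle to be the bookkeeping in Step~3: one must confirm that after shortcutting the merged walk, every vertex of $V[T_{\rm Garg}]\cup V[T_{\rm GW}]$ is still visited (so that the penalty bound is preserved) and that no edge cost increases (which is exactly where the triangle-inequality hypothesis is used). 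Everything else is a direct transcription of the $k$-PCST argument, so the bulk of the work is verifying this topological step and then assembling the three inequalities above.
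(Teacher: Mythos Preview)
Your proposal is correct and follows exactly the paper's approach: the paper's proof is the single chain $p \le p_{\rm Garg}+p_{\rm GW} \le 2\,\mathrm{OPT}_{k\text{-TSP}}+2\,\mathrm{OPT}_{\rm PTSP} \le 4\,\mathrm{OPT}$, and you have simply spelled out each inequality (feasibility via shortcutting, the two $2$-approximation guarantees, and the tour analogues of Lemmas~\ref{lm:lm1}--\ref{lm:lm2}) in more detail than the paper does.
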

\begin{proof}
Suppose that $p$, $p_{\rm Garg}$, and $p_{\rm GW}$ are costs of $T_{\rm OUT}$, $T_{\rm Garg}$, and $T_{\rm GW}$, respectively. We denote the optimal value for each problem to OPT$_{k{\rm -TSP}}$
, OPT$_{\rm PTSP}$, and OPT.

$$p \leq p_{\rm Garg} + p_{\rm GW} \leq 2\mathrm{OPT_{k{\rm -TSP}}} + 2\mathrm{OPT_{\rm PTSP}} \leq 4\mathrm{OPT}.$$
\end{proof}

\section{Conclusion}
This paper proposed 4-approximation algorithm for both $k$-prize collecting Steiner tree problems and $k$-prize collecting traveling salesman problems. 
The time complexity of the proposed approximation algorithm is O ($mn^4 \log n$). A bottleneck is Garg's approximation algorithm for $k$-MST, $k$-TSP. As a future task, in order to reduce the calculation amount, reduction of the time complexity of the approximation algorithm for $k$-MST and $k$-TSP can be mentioned. Improvement of the approximation rate for $k$-PCST and $k$-PCTSP is also a future task.

\section*{Acknowledgement}
This research was partially supported by the Ministry of Education, Science, Sports and Culture through Grants-in-Aid for Scientific Research (C) 26330025, (B) 15H02972, and through Grants-in-Aid for Young Scientists (B) 26870200.
\\

\end{document}